\theoremstyle{plain}
\newtheorem{theorem}{Theorem}[section]
\newtheorem{lemma}[theorem]{Lemma}
\theoremstyle{remark}
\newtheorem{remark}[theorem]{Remark}
\theoremstyle{definition}
\newtheorem{definition}[theorem]{Definition}
\newcommand{\R}{\mathbb R}
\newcommand{\Ff}{\mathcal{F}}
\begin{document}

\title[Detect a cheater's action]{A randomized most powerful test to detect a cheater's action. Applicaton to identification of listeriosis in Lombardy}


\author[G. Aletti]{Giacomo Aletti}
\address{ADAMSS Center and Dept. of Mathematics, Universit\`a degli Studi di Milano, Via Saldini 50, 20133, Milano, Italy}
\email{giacomo.aletti@unimi.it}

\author{Irene Matuonto}
\address{ADAMSS Center and Dept. of Mathematics, Universit\`a degli Studi di Milano, Via Saldini 50, 20133, Milano, Italy}
\email{iremat@libero.it}
\thanks{corresponding author}

\author{Mirella Pontello}
\address{Dept. of Health Sciences, Universit\`a degli Studi di Milano, via di Rudin\`\i\ 8, 20143 Milano , Italy}
\email{mirella.pontello@unimi.it}

\subjclass[2010]{Primary: 62P10; Secondary: 62G05}
 
\keywords{monotone models; identification of epidemics; rare events testing; extremeness; listeriosis}

\date{October 12, 2014}


\begin{abstract}
This article presents a new randomized non-parametric test based on a sample of independent but not identically distributed variables; this test detects if a cheater replaces one of the distributions of the sample with a convex-dominating one.
The presented test is the uniformely most powerful, in the sense that it is the most powerful for any change of the cheater.
We show that this test may be applied when we have variables with distribution satisfying the monotone likelihood ratio property and we need to check whether a parameter of a variable has been changed.\\
The application we present concerns the detection of epidemics of listeriosis in Lombardy from 2005 to 2011.
\end{abstract}

\maketitle

\section{Introduction}

Let us consider a sample of independent random variables with (0,1)-uniform distribution:
$Y_1, Y_2, \ldots, Y_n$. We need to discover if a cheater has replaced one of
the variables with another r.v. $Y$ defined on $(0,1)$ with a convex cdf $F_Y$, stochastically dominating
the uniform ditribution, i.e.\ $F(x)\geq x, \forall x\in (0,1) .$\\
We test
the null hypothesis $H_0$ of equal and uniform distribution of all the variables against
the alternative hypothesis $H_1$ of a cheater's replacement. We choose to observe the statistics
$\hat Y=\max(Y_1,\ldots, Y_n)$, that we call extreme event.
In Theorem \ref{teo: MP_test} we present the non-parametric uniformly most powerful (UMP) test,
in the sense that it is the most powerful (MP) test for any cheater's choice; in particular, if
$\alpha$ is the significance level of the test, we reject $H_0$ if $\hat Y > \sqrt[n]{1-\alpha}$.\\
This game can be extended to a sample of independent r.v.'s $X_1, X_2, \ldots, X_n$
with cfd $F^0_{X_1}, F^0_{X_2}, \ldots, F^0_{X_n}$ by noting that $Y_i = F^0_{X_i}(X_i)\sim U(0,1)$; if
the original distributions are discrete, a randomization can be applied (see Lemma \ref{lem:unif}).
In this general framework, we suppose that the cheater changes $F^0_{X_j}$ with another distribution $F^1_{X_j}$: to
apply the previous results we should be sure that the randomization of $F^0_{X_j}(X_j)$ is convex when $X_j\sim F^1_{X_j}$.
We show that this is the case whenever $F^0_{X_j}$ and $F^1_{X_j}$ have the monotone likelihood ratio property
(see Theorem~\ref{teo: MLR}).\\
As an example, the result can be applied when we ask ourselves whether the occurrences of a (even rare)
disease show evidence of an epidemics. In our case, we apply it to a listeriosis database.

The first important work on epidemics based on spatial analysis is
Dr.~John Snow's study of London's cholera epidemics \cite{snow}. After
him, many other researchers have used spatial analysis to study
subjects concerning Public Health; a wide collection of these
topics, with particular interest to the statistical point of view,
is given by Waller and Gotway \cite{pubhealth}.\\
In our data, we have a sample of random variables $N_{i,j}$ counting the number of occurrences
of a disease in a certain region $i$ at time $j$ that we model with a
Poisson distribution, so $F^0_{i,j}\sim \mathcal{P}(\lambda \cdot p_{i,j})$, i.e.
the number $N_{i,j}$ is distributed as a Poisson depending
on a parameter $\lambda$ and on the population $p_{i,j}$ of region $i$ at time $j$.
When an epidemics occurs, the parameter $\lambda$ of the corresponding $N_{i,j}$
increases: $\lambda_1 > \lambda$. Since the family of Poisson model
has the monotone likelihood ratio property, we may apply our test to
$(F^0_{i,j}(N_{i,j}))_{i,j}$.\\
The small number of data is the critical point of
the analysis, because usual limit theorems cannot be used; for this
reason it is important to find the UMP test.\\

In Section \ref{sect: MP_test} we present the main result of the
article: the UMP test for extreme monotone randomized models. First of
all we recall the definition and some properties of the Skorohod
representation of a random variable, then in Subsection
\ref{subsect: MP1} we introduce the randomization of the quantile
function for a non-continuous random variable, then in Subsection
\ref{subsect: MP2} we define the extreme event of a sample, and
finally in
\ref{subsect: MP3} we present the test and prove the UMP property.\\
Section \ref{sect: application} shows the application of the test to
listeriosis. In particular, in Subsection \ref{subsect: listeriosi}
we present the data.
Then we perform a test to identify the place and time of
occurrence of the possible epidemic. This hypothesis test is
described in Subsection \ref{subsect: distr spatiotemp} and is based
on the UMP test introduced before; in this part of the article we
also recall a result to find $p$-values of a discrete distribution based
on the Skorokhod's representation of a random variable
\cite{williams}.

\section{A UMP test for extreme monotone randomized models}\label{sect:
MP_test}
To state the main result, we recall the definition and
properties of the Skorohod representation of a random variable \cite{williams}.\\
\begin{theorem} Let $F$ be a cumulative distribution function; then
$$N(\omega)=\sup\{y|F(y)<\omega\}$$
with $\omega$ from a probability space with uniform probability
distribution on $[0,1]$ is a random variable with cdf $F$.
\end{theorem}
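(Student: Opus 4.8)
The plan is to reduce everything to a single set identity: for each $x\in\R$ the event $\{\omega: N(\omega)\le x\}$ coincides with the event $\{\omega:\omega\le F(x)\}$. Since $\omega$ carries the uniform law on $[0,1]$ and $F(x)\in[0,1]$, this identity immediately yields $\prob(N\le x)=\prob(\omega\le F(x))=F(x)$, which is exactly the assertion; measurability of $N$ comes for free, because each of these sets is a Borel interval and they generate the $\sigma$-algebra on $\R$.

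First I would record the standard properties of a cdf that the argument uses: $F$ is nondecreasing, right-continuous, and satisfies $F(y)\to 0$ as $y\to-\infty$ and $F(y)\to 1$ as $y\to+\infty$. For $\omega\in(0,1)$ these guarantee that $A_\omega:=\{y:F(y)<\omega\}$ is nonempty (by the limit at $-\infty$) and bounded above (by the limit at $+\infty$), so that $N(\omega)=\sup A_\omega$ is a well-defined real number; the two boundary values $\omega=0$ and $\omega=1$ form a null event and may be discarded.

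Next comes the double inclusion. For $\{\omega\le F(x)\}\subseteq\{N(\omega)\le x\}$: if $\omega\le F(x)$ and $y\in A_\omega$, then $F(y)<\omega\le F(x)$, so $F(y)<F(x)$, and monotonicity of $F$ forces $y<x$; hence every element of $A_\omega$ lies below $x$ and $N(\omega)=\sup A_\omega\le x$. For the reverse inclusion I argue by contraposition: if $\omega>F(x)$, then by right-continuity of $F$ there is some $x'>x$ with $F(x')<\omega$, so $x'\in A_\omega$ and $N(\omega)\ge x'>x$, i.e.\ $N(\omega)\le x$ fails. Combining the two inclusions gives $\{N\le x\}=\{\omega\le F(x)\}$ and therefore $\prob(N\le x)=F(x)$ for all $x$.

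The only genuinely delicate step is the reverse inclusion, where right-continuity of $F$ is indispensable: one must exhibit a point strictly to the right of $x$ that still lies in $A_\omega$, which a merely nondecreasing $F$ would not allow. Everything else is routine bookkeeping with the monotonicity of $F$, and the main thing to be careful about is tracking whether a strict or a non-strict inequality appears at each comparison (in particular that $F(y)<\omega\le F(x)$ yields $y<x$, not just $y\le x$).
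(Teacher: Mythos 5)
The paper does not prove this theorem at all: it is recalled verbatim as a known result (the Skorokhod representation) with a citation to Williams, so there is no in-paper proof to compare against. Your argument is correct and is precisely the classical double-inclusion proof of that result --- establishing $\{N\le x\}=\{\omega\le F(x)\}$ via monotonicity in one direction and right-continuity in the other --- so it fills the gap faithfully and needs no changes.
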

\begin{definition}
The random variable defined in the previous theorem is called the
\textbf{Skorhod representation} of any random variable with
distribution $F$.
\end{definition}
The Skorohod representation of a random variable has many
properties; we mention two of them:
\begin{enumerate}
\item \label{uno} $F(N(\omega))\geq \omega$;
\item \label{due} $ F(z)>\omega \Rightarrow z>N(\omega)$.
\end{enumerate}

We are interested in finding a UMP test for the extreme event on a set of data. In our contest, the highest result we get, the more extreme it is. Therefore, to compare results from different distributions, we use the (randomized) quantile function as an index of ``extremeness''.

\subsection{Randomization of quantile function} \label{subsect: MP1}
Assume that we observe the real number $x$, that is the outcome of a continuous variable $X$ with cumulative function $F$. Its natural extremal index is
\(
p_x = P(X\leq x) =F(x).
\)
Unfortunately, it is well known that $F$ is continuous if and only if $F(X)$ is uniformly distributed on $(0,1)$, and in this case $F(X)$ is the quantile of $X$. We define now a randomization version of the quantile function which is uniformly distributed on $(0,1)$, even if the random variable $X$ is not continuous.

Let $F$ be a cumulative function. We define the function ${\Ff}_F:{\R}\times {[0,1]}\to [0,1]$ as
\begin{equation}\label{eq:def_Ff}
\Ff_F(x,u) = (1-u)F(x^-) + u F(x),
\end{equation}
so that $\Ff$ has the following properties:
\begin{itemize}
\item for any $(x,u)$, $F(x^-)\leq \Ff_F(x,u) \leq F(x)$, and hence $\Ff_F(x,u)$ is an extension of the function $F(x)$ when $F$ is not continuous;
\item if $F(x)>F(y)$, then $\Ff_F(x,u)>\Ff_F(y,v)$ for any $u,v\in (0,1)$; and hence $\Ff_F$ preserves the results with higher extremeness;
\item as a consequence of Properties~\ref{uno} and \ref{due} of the Skorhod representation,
if $U$ is a $(0,1)$-uniform random variable independent of $X$ (randomization effect), $\Ff_{F}(X,U)$ is always a $(0,1)$-uniform random variable, as the following lemma states.
\end{itemize}

\begin{lemma}\label{lem:unif}
Let $N$ be a random variable with distribution function $F$ and $U$ be a $(0,1)$-uniform random variable independent of $N$. Then the random variable
\[
\Ff_{F}({N},U) = (1-U)F({N}^-) + U F(N) ,
\]
is a $(0,1)$-uniform random variable.
\end{lemma}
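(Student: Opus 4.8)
The plan is to verify directly that $V:=\Ff_F(N,U)$ satisfies $\prob(V\le t)=t$ for every $t\in(0,1)$; since $0\le V\le 1$ always (it is a convex combination of $F(N^-),F(N)\in[0,1]$), the values $t=0$ and $t=1$ then follow by monotone limits and $V$ is uniform on $(0,1)$. The only ingredients I will need are: the sandwich $F(x^-)\le \Ff_F(x,u)\le F(x)$, which is immediate from \eqref{eq:def_Ff}; monotonicity and right‑continuity of $F$; and the independence of $U$ from $N$. Properties~\ref{uno} and \ref{due} of the Skorohod representation will enter only to pin down the threshold introduced below.

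Fix $t\in(0,1)$ and set $x_t:=\sup\{y:F(y)<t\}$, i.e.\ the Skorohod representation of $F$ evaluated at $t$. From Properties~\ref{uno} and \ref{due} together with monotonicity one gets $F(x_t^-)\le t\le F(x_t)$, and moreover $F(y)<t$ for $y<x_t$ while $F(y)\ge t$ for $y\ge x_t$. I would then partition $\{V\le t\}$ according to the position of $N$ relative to $x_t$. On $\{N<x_t\}$ we have $F(N)<t$, hence $V\le F(N)<t$, so this event contributes its whole probability $\prob(N<x_t)=F(x_t^-)$. On $\{N>x_t\}$ we have $F(N^-)=\sup_{z<N}F(z)\ge F(x_t)\ge t$, hence $V\ge F(N^-)\ge t$; equality $V=t$ would force $F(N^-)=F(N)=t$ with $N>x_t$, i.e.\ $F\equiv t$ on an interval lying to the right of $x_t$, an event of probability $0$, so this region contributes $0$. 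On $\{N=x_t\}$, conditioning on $N=x_t$ and using $U\perp N$, the variable $V=(1-U)F(x_t^-)+UF(x_t)$ is uniform on $[F(x_t^-),F(x_t)]$, so $\prob(V\le t\mid N=x_t)=\frac{t-F(x_t^-)}{F(x_t)-F(x_t^-)}$ when $x_t$ is an atom; multiplying by $\prob(N=x_t)=F(x_t)-F(x_t^-)$ gives the contribution $t-F(x_t^-)$ (which is the correct value $0$ as well when there is no atom, since then $F(x_t^-)=t=F(x_t)$). Adding the three contributions yields $F(x_t^-)+0+\bigl(t-F(x_t^-)\bigr)=t$, as desired.

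The one place that needs care is the middle case $\{N>x_t\}$: one has to argue that the exceptional set on which $V$ could still equal $t$ from above is $\prob$‑null, and this is precisely the remark that if $F$ attains the value $t$ at some point strictly to the right of $x_t$ then it must be constant and equal to $t$ on an interval that carries no $F$‑mass. Everything else is just bookkeeping of the strict versus non‑strict inequalities — equivalently, of the distinction between $F(x^-)$ and $F(x)$ — so that the three cases genuinely partition $\{V\le t\}$. An alternative route computes first the conditional law of $V$ given $N=x$ (uniform on $[F(x^-),F(x)]$ for each $x$) and then integrates against $dF$; this works too, but it requires splitting $F$ into its continuous and purely atomic parts, so the threshold argument above is the more economical one.
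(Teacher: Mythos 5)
Your proof is correct, and it follows essentially the route the paper indicates: the paper states no explicit proof of this lemma, only the remark that it is "a consequence of Properties~\ref{uno} and \ref{due} of the Skorohod representation", and your threshold $x_t=\sup\{y:F(y)<t\}$ is exactly that representation evaluated at $t$, with the three-way split $\{N<x_t\}$, $\{N=x_t\}$, $\{N>x_t\}$ contributing $F(x_t^-)$, $t-F(x_t^-)$ and $0$ respectively, so that $\prob\bigl(\Ff_F(N,U)\le t\bigr)=t$. The only point requiring care — ruling out $\Ff_F(N,U)=t$ on $\{N>x_t\}$ via the observation that this forces $F$ to be constant equal to $t$ on an interval carrying no $F$-mass — is handled correctly in your argument.
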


\subsection{Extreme randomized event} \label{subsect: MP2}
A sample size of $n$ independent random variables $N_1,\ldots,N_n$
is given. Under the null hypothesis we assume
$\{F^0_i,i=1,\ldots,n\}$ to be their cumulative functions. Given a
set $U_1,\ldots,U_n$ of independent $(0,1)$-uniform random variables
(randomization effects), we may compute the indexes of extremeness
\begin{equation}\label{eq:def_Yi}
Y_i(N_i,U_i) = \Ff_{F^0_{i}}(N_i,U_i) =
(1-U_i)F^0_{i}(N_i^-) + U_i F^0_{i}(N_i), \qquad i = 1,\ldots, n.
\end{equation}
We observe the \emph{extreme event} $\hat{Y}=\max (Y_1,\ldots,Y_n)$.
By definition the extreme event is the greatest realization, once the random variables have been randomized and rescaled.
\begin{remark}
Note that, where $F^0_{i}$ is continuous at $N_i$, then $Y_i(N_i,U_i)= F^0_{i}(N_i)$: the randomization in \eqref{eq:def_Yi}
affects only the discrete set of outcomes of $N_i$ with positive probability.
\end{remark}

\subsection{Monotone models} \label{subsect: MP3}
We are interested in testing if the extreme event is coming from its
alternative distribution. More precisely, we test
\[
H_0: \{F_i = F_i^0,i=1,\ldots,n\}, \qquad H_1: \{F_i = F_i^0,i\neq j\}, F_j = F_j^1.
\]
The point of the test is the distribution of the maximum of the variables $\{Y_i,i=1,\ldots,n\}$. Under the null hypothesis, Lemma~\ref{lem:unif} states that $\{Y_i,i=1,\ldots,n\}$ are independent $(0,1)$-uniform random variables. Hence, the density of each $Y_i$ is constant if the distribution function of $N_i$ is $F_i^0$. The following definition states that in monotone models, the highest results are more and more likely in alternative hypothesis compared to the null one. In other words, $Y_i$ under $H_0$ is smaller than $Y_i$ under $H_1$ in the likelihood ratio order.

\begin{definition}\label{def:MLR}
We define the model to be \emph{monotone} if, for any $i=1,\ldots, n$, $F_{Y_i}$ is convex under the alternative hypothesis.
\end{definition}

\begin{remark}\label{rem:MLR}
Since every convex function on $[0,1]$ is differentiable almost everywhere with non-decreasing derivative, then a model is monotone if and only if $Y_i$ has a monotone non-decreasing density under the alternative hypothesis.
\end{remark}

\begin{theorem} \label{teo: MLR}
All the families that have the \emph{monotone likelihood ratio
(MLR)} property belong to monotone models.
\end{theorem}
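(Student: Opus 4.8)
Theorem \ref{teo: MLR} asks us to show, for each index $j$, that if the pair $F^{0}:=F^{0}_{j}$, $F^{1}:=F^{1}_{j}$ enjoys the MLR property, then the cdf of $Y_{j}=\Ff_{F^{0}}(N_{j},U_{j})$ is convex when $N_{j}\sim F^{1}$. By Remark \ref{rem:MLR} it is enough to produce a non-decreasing density for $Y_{j}$ under $H_{1}$, and that is what I would do.

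The plan is to transport the computation back to $H_{0}$, where Lemma \ref{lem:unif} already gives $Y_{j}\sim U(0,1)$. Two ingredients are needed. First, the MLR hypothesis means exactly that one can write $dF^{1}=L\,dF^{0}$ with $L$ non-decreasing (the likelihood ratio), so that $E_{H_{1}}[\phi(Y_{j})]=E_{H_{0}}[L(N_{j})\,\phi(Y_{j})]$ for every bounded measurable $\phi$. Second, $N_{j}$ is almost surely a deterministic function of $Y_{j}$: by construction $F^{0}(N_{j}^{-})\le Y_{j}\le F^{0}(N_{j})$, and since $U_{j}$ is atomless this forces $Y_{j}\in(F^{0}(N_{j}^{-}),F^{0}(N_{j}))$ a.s.\ when $N_{j}$ is an atom of $F^{0}$ and $Y_{j}=F^{0}(N_{j})$ otherwise; in either case Properties \ref{uno}--\ref{due} of the Skorohod representation identify $N_{j}=q(Y_{j})$ a.s., where $q(t)=\sup\{y:F^{0}(y)<t\}$ is the Skorohod quantile map, which is non-decreasing in $t$. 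Substituting, $E_{H_{1}}[\phi(Y_{j})]=\int_{0}^{1}L(q(y))\,\phi(y)\,dy$, so $Y_{j}$ has density $g=L\circ q$ under $H_{1}$. As a composition of two non-decreasing functions, $g$ is non-decreasing (and $\int_{0}^{1}g=\int L\,dF^{0}=1$), and Remark \ref{rem:MLR} concludes.

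The only subtle point, which I expect to be the main (mild) obstacle, is the passage $dF^{1}=L\,dF^{0}$ with $L$ non-decreasing, because a family with the MLR property need not satisfy $F^{1}\ll F^{0}$: translated uniforms are the standard example. I would deal with this by fixing a common $\sigma$-finite dominating measure with densities $f^{0},f^{1}$ and noting that MLR forces two things --- that $F^{1}$ puts no mass strictly below the support of $F^{0}$ (otherwise the ratio would jump down), and that the excess mass of $F^{1}$, if any, sits only above the top of the support of $F^{0}$. Such excess mass contributes an atom of $Y_{j}$ at the right end of $[0,1]$ while the rest of $Y_{j}$ still has the non-decreasing density $L\circ q$ computed above, and an upward jump at the right endpoint keeps $F_{Y_{j}}$ convex on $[0,1]$; so the conclusion survives. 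For completeness, and because it already covers the Poisson application, I would also record the elementary special case of a purely discrete $F^{0}$: conditioning on $N_{j}$, for $t$ in an interval $\big(F^{0}(k^{-}),F^{0}(k)\big)$ between two consecutive atoms one finds $\prob_{H_{1}}(Y_{j}\le t)=\prob_{H_{1}}(N_{j}<k)+\frac{t-F^{0}(k^{-})}{F^{0}(k)-F^{0}(k^{-})}\,\prob_{H_{1}}(N_{j}=k)$, which is affine in $t$ with slope $f^{1}(k)/f^{0}(k)$; this cdf is continuous at the knots and, by MLR, its successive slopes are non-decreasing, so it is convex.
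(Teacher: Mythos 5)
Your proof is correct, but it takes a genuinely different route from the paper's. The paper splits into two cases and computes the density of $Y_j$ under $H_1$ directly in each: in the absolutely continuous case it applies the change-of-variables formula to $Y=F^0(N^1)$ to get $f_Y(y)=f^1(x)/f^0(x)$ with $y=F^0(x)$, and in the discrete case it derives exactly the piecewise-affine cdf you record at the end of your proposal, with slope $p^1(x)/p^0(x)$ on the interval $(F^0(x^-),F^0(x))$; monotonicity of the likelihood ratio then gives a non-decreasing density in both cases. Your change-of-measure argument --- writing $dF^1=L\,dF^0$, observing that $N_j=q(Y_j)$ a.s.\ under $H_0$, and concluding that $Y_j$ has density $L\circ q$ under $H_1$ --- treats continuous, discrete and mixed distributions in one stroke, so it is strictly more general than the paper's two-case treatment. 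It also repairs a real weakness: the paper simply asserts that MLR implies mutual absolute continuity of $F^1$ and $F^0$, which is false in general (translated uniforms), whereas you isolate the singular part of $F^1$, show by MLR that it can only sit above the top of the support of $F^0$, and check that the resulting atom of $Y_j$ at $1$ still leaves $F_{Y_j}$ convex. The one point you pass over quickly is the a.s.\ identity $N_j=q(Y_j)$: it can fail where $F^0$ is locally constant to the left of $N_j$, but such points form an $F^0$-null set, so the identity holds $\prob_{H_0}$-a.s.\ and your substitution is legitimate. The trade-off is the expected one: the paper's computation is more elementary and directly usable in the Poisson application, while yours is cleaner, covers all cases at once, and is honest about the absolute-continuity issue.
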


\begin{proof}
Fixed $i\in \{1,\ldots,n \}$, let $F^0= F^0_i$, $F^1= F^1_i$ and $Y=Y_i$ as in \eqref{eq:def_Yi}. We denote with $N=N^1$ the fact that the true model is the alternative one $H_1$. We recall that the MLR property imply the absolute continuity of $F^1$ with respect to $F^0$ and viceversa.
We divide the proof between continuous and discrete models, since the contribution of $U$ in \eqref{eq:def_Yi} depends on it.
\begin{description}
\item[Absolutely continuous case:] in this case, by definition \(Y = F^0(N^1) \), where $N^1$ has density $f^1$ and $F^0$ is the cumulative function with density $f_0$. By the Change of Variables Formula, we get \(f_Y(y) = \frac{f^1(x)}{f^{0}(x)}\), where $y=F^0(x)$. The thesis is a consequence of the MLR property in continuous case, namely
\(
\frac{f^1(x_1)}{f^0(x_1)} \geq \frac{f^1(x_0)}{f^0(x_0)}
\),
for any \(x_1 > x_0\).
\item[Discrete case:] let $p\in(0,1)$ be fixed. Then there exists $x$ in the range of $N$ such that $p\in[F^0(x^-),F^0(x))$. By partitioning the space in $N^1< x$, $N^1> x$ and $N^1= x$, we obtain:
\begin{align*}
P(\Ff_{F^0}(N^1,U) \leq y )
=F^{1}(x^-) + \frac{p^1(x)}{p^0(x)} (y - F^0(x^-)) ,
\end{align*}
and hence \(f_Y(y) = \frac{p^1(x)}{p^{0}(x)}\).\\ Since $x$ is monotone in $p$, the thesis is a consequence of the MLR property in the continuous case, namely
\(
\frac{p^1(x_1)}{p^0(x_1)} \geq \frac{p^1(x_0)}{p^0(x_0)}
\),
for any \(x_1 > x_0\). \qedhere
\end{description}
\end{proof}

\begin{theorem} \label{teo: MP_test}
With the notations given above, a $\alpha$-level UMP test for testing
the extreme event of a monotone model is of the form
\[
\Phi(N_1,\ldots,N_n)
=
\begin{cases}
1, & \text{if } M > \sqrt[n]{1-\alpha};
\\
0, & \text{if } M  \leq \sqrt[n]{1-\alpha};
\\
1-\prod_{j\in R} \frac{\sqrt[n]{1-\alpha}-F^0_j(N_j^-)}{F^0_j(N_j)-F^0_j(N_j^-)}, & \text{otherwise};
\end{cases}
\]
where $M = \max (F^0_1(N_1^-),\ldots,F^0_n(N_n^-)) $ and $R=\{j \colon F^0_j(N_j^-)<\sqrt[n]{1-\alpha} <F^0_j(N_j)\}$, or, equivalently,
\[
\Phi(N_1,\ldots,N_n,U_1,\ldots,U_n)
=
\begin{cases}
1, & \text{if } \max (Y_1,\ldots,Y_n) > \sqrt[n]{1-\alpha};
\\
0, & \text{otherwise}.
\end{cases}
\]
\end{theorem}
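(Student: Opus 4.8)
The plan is to reduce the problem to the simplest possible setting---testing whether a single uniform variable has been replaced by a convex-dominating one---where the Neyman--Pearson lemma applies directly, and then to package the conclusion back into the two equivalent forms of the test. First I would translate everything into the $Y$-variables of \eqref{eq:def_Yi}. By Lemma~\ref{lem:unif}, under $H_0$ the $Y_1,\dots,Y_n$ are i.i.d.\ uniform on $(0,1)$, so $\hat Y=\max(Y_1,\dots,Y_n)$ has cdf $y^n$ on $(0,1)$; in particular $\prob_{H_0}(\hat Y>\sqrt[n]{1-\alpha})=1-(1-\alpha)=\alpha$, which shows the test has exact size $\alpha$. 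Under $H_1$ the index $j$ is replaced, and by Definition~\ref{def:MLR} (and Remark~\ref{rem:MLR}) the replaced $Y_j$ has a non-decreasing density $g$ on $(0,1)$ while the others remain uniform; hence the joint density of $(Y_1,\dots,Y_n)$ under $H_1$ is $\prod_{i\ne j}\mathbf 1_{(0,1)}(y_i)\cdot g(y_j)$.

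The core step is to show that $\{\hat Y>c\}$, with $c=\sqrt[n]{1-\alpha}$, is a Neyman--Pearson optimal rejection region for \emph{every} admissible alternative simultaneously. I would compute the likelihood ratio $L(y_1,\dots,y_n)=g(y_j)$ (as a function on $(0,1)^n$ under $H_0$-measure) and observe two things: (i) $L$ depends only on the coordinate $y_j$, and $L$ is non-decreasing in $y_j$ because $g$ is non-decreasing; (ii) on the set $\{\hat Y\le c\}=\{y_i\le c\ \forall i\}$ we have $y_j\le c$, whereas the region $\{\hat Y>c\}$ contains, in particular, all points with $y_j>c$. The subtlety is that $\{\hat Y>c\}$ also contains points where $y_j\le c$ but some other $y_i>c$, and there $L(y)=g(y_j)$ may be \emph{smaller} than values of $L$ attained inside $\{\hat Y\le c\}$ (where $y_j$ can be anything up to $c$). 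So $\{\hat Y>c\}$ is \emph{not} a level set of $L$, and one cannot invoke Neyman--Pearson naively---this is the main obstacle. The resolution is that optimality must be proven against the specific structure: I would show that \emph{no} test can beat $\{\hat Y>c\}$ by a monotonicity/rearrangement argument. Precisely, writing the power of an arbitrary level-$\alpha$ test $\psi$ as $\int \psi(y)g(y_j)\,dy$ and its size constraint as $\int\psi(y)\,dy\le\alpha$, I would integrate out all coordinates except $y_j$, reducing both to one-dimensional functionals of $h(t):=\int \psi(t,y_{-j})\,dy_{-j}\in[0,1]$ with $\int_0^1 h(t)\,dt\le\alpha$, and power $\int_0^1 h(t)g(t)\,dt$. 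Since $g$ is non-decreasing, this one-dimensional linear program is maximized by pushing all the mass of $h$ to the right, i.e.\ $h^*(t)=\mathbf 1_{(1-\alpha,1)}(t)$. Finally I would verify that the test $\Phi$ in the theorem realizes exactly this $h^*$: on the $j$-th coordinate it rejects iff $Y_j>c$, and the randomization term $1-\prod_{j\in R}\frac{c-F^0_j(N_j^-)}{F^0_j(N_j)-F^0_j(N_j^-)}$ is precisely $\prob(\max_{j\in R} Y_j>c\mid N_1,\dots,N_n)$, so that averaging over the randomization variables gives $\{\max Y_i>c\}$, hence $h=h^*$ on coordinate $j$ for the relevant alternative. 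This yields optimality against every $j$ and every admissible $g$, i.e.\ the UMP property.

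For the equivalence of the two displayed forms I would argue conditionally on $(N_1,\dots,N_n)$: given the data, $Y_i=(1-U_i)F^0_i(N_i^-)+U_iF^0_i(N_i)$ is, as a function of the independent uniforms $U_i$, uniform on $[F^0_i(N_i^-),F^0_i(N_i)]$. Thus $Y_i>c$ is impossible if $F^0_i(N_i)\le c$, certain if $F^0_i(N_i^-)\ge c$ (equivalently $M\ge c$, giving $\Phi=1$), impossible for all $i$ if $F^0_i(N_i)\le c$ for all $i$, and otherwise occurs for $i\in R$ with conditional probability $\frac{F^0_i(N_i)-c}{F^0_i(N_i)-F^0_i(N_i^-)}$ independently across $i\in R$; the complement of ``$Y_i\le c$ for all $i\in R$'' then has probability $1-\prod_{i\in R}\frac{c-F^0_i(N_i^-)}{F^0_i(N_i)-F^0_i(N_i^-)}$, matching the third branch. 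Hence $\mathbb E[\mathbf 1_{\{\max Y_i>c\}}\mid N_1,\dots,N_n]=\Phi(N_1,\dots,N_n)$, so the two forms define the same test function, completing the proof.
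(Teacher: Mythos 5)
Your reduction to a one-dimensional problem is where the argument breaks. After integrating out the coordinates $y_i$, $i\ne j$, you correctly conclude that among \emph{all} level-$\alpha$ tests the power $\int_0^1 h(t)g(t)\,dt$ is maximized by $h^*(t)=\mathbf 1_{(1-\alpha,1)}(t)$, i.e.\ by the test that rejects iff $Y_j>1-\alpha$ and ignores the other coordinates. But the test of the theorem does \emph{not} realize this $h^*$: conditioning on $Y_j=t$, the event $\{\max_i Y_i>c\}$ with $c=\sqrt[n]{1-\alpha}$ has probability $1$ for $t>c$ and probability $1-c^{\,n-1}>0$ for $t\le c$, so its $h$ is $(1-c^{\,n-1})\mathbf 1_{(0,c]}+\mathbf 1_{(c,1)}\ne h^*$ (note also $c\ne 1-\alpha$). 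Since $1-\alpha=c^n$, one checks $\int_{1-\alpha}^{c}g\ \ge\ (1-c^{\,n-1})\int_0^c g$ with strict inequality for strictly increasing $g$, so $h^*$ has strictly larger power than the test's $h$. Thus your own computation shows that $\{\hat Y>c\}$ is \emph{not} most powerful in the unrestricted class, and your final verification step (``on the $j$-th coordinate it rejects iff $Y_j>c$, hence $h=h^*$'') is false: the test rejects on $\{\max_i Y_i>c\}$, not on $\{Y_j>c\}$. The theorem cannot be established in the unrestricted sense; the optimality claimed is only within the class of tests that are functions of the extreme event $\hat Y=\max_i Y_i$, which is how the paper's phrasing ``UMP test for testing the extreme event'' must be read.

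The paper's proof accordingly works with the scalar statistic $\hat Y$ alone: it computes $P_{H_1}(\hat Y\le x)$ (as $x^{n-1}F^1_{Y_j}(x)$, mixed over $j$ with weights $\tau_j$), differentiates, and shows that
\[
\frac{f^1_{\hat Y}(x)}{f^0_{\hat Y}(x)}
=\sum_j\Bigl(\frac{f^1_{Y_j}(x)}{n}+\frac{n-1}{n}\cdot\frac{1}{x}\int_0^x f^1_{Y_j}(y)\,dy\Bigr)\tau_j
\]
is non-decreasing in $x$, because each $f^1_{Y_j}$ is non-decreasing (monotone model) and the integral mean of a non-negative non-decreasing function is non-decreasing; Neyman--Pearson applied to the distribution of $\hat Y$ then yields that rejecting for $\hat Y>c$ is most powerful among tests based on $\hat Y$, uniformly over $j$ and over admissible alternatives. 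Your size computation and your conditional verification of the equivalence of the two displayed forms of $\Phi$ are correct (and more detailed than the paper's), but the core optimality argument needs to be replaced by this restricted Neyman--Pearson argument on the law of $\hat Y$.
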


\begin{proof}
The equivalence of the two definitions of $\Phi$ is a simple consequence of \eqref{eq:def_Yi}.

To use Neyman-Pearson lemma applied to the extreme event $\hat{Y}=\max (Y_1,\ldots,Y_n)$, we first note that, under $H_0$, $\{Y_i,i=1,\ldots,n\}$ are independent $(0,1)$-uniform random variables, and hence
\(
f^0_{\hat{Y}}(x) = n x^{n-1}
\)
for any $x\in (0,1)$, and moreover,
\[
E^0(\Phi(N_1,\ldots,N_n,U_1,\ldots,U_n))
=1 - (\sqrt[n]{1-\alpha})^n = \alpha.
\]
Under the null hypothesis $H_0$, setting $\tau_j = P(\hat{Y}=Y_j)$, we get
\[
P_{H_1}(\hat{Y}\leq x)
= \sum_j P_{H_1}(\hat{Y}\leq x| \hat{Y}=Y_j) P(\hat{Y}=Y_j)
= \sum_j  x^{n-1}F^1_{Y_j}(x)  \tau_j ;
\]
and hence
\begin{multline*}
\frac{f^1_{\hat{Y}}(x) }{f^0_{\hat{Y}}(x)} = \frac{\sum_j  (x^{n-1}f^1_{Y_j}(x)+(n-1)x^{n-2}F^1_{Y_j}(x))  \tau_j }{nx^{n-1}} =\\
= \sum_j  \frac{f^1_{Y_j}(x) }{n} \tau_j
+ \frac{n-1 }{n}\sum_j  \frac{\int_0^x f^1_{Y_j}(y)\,dy  }{x} \tau_j ,
\end{multline*}
and, by definition of monotone model, both the terms are convex
combination of monotone non-decreasing functions, the second being
the integral mean of a monotone and non-negative function.
Therefore, $\frac{f^1_{\hat{Y}}(x) }{f^0_{\hat{Y}}(x)} $ is monotone
in $x$, and the thesis is proved.
\end{proof}

\section{Application to listeriosis} \label{sect: application}
 Invasive
listeriosis is a rare severe disease with low annual incidence ($<
1/100\,000$). It typically includes long incubation periods (7-60
days), usually resulting in hospitalization (85\% to 90\%) and has a
high fatality rate (20-50\%). Persons with specific
immunocompromising conditions, pregnant women and newborns appear to
be particularly susceptible to invasive listeriosis, and most
reported cases occur in these specific risk groups. The
identification of outbreaks is difficult because of the long
incubation period of the invasive forms (even several weeks) and of
the probable large number of asymptomatic or paucisymptomatic
infections even in people exposed to the same infection vehicle
\cite{list_primer, list_review}.

\subsection{The data: listeriosis in Lombardy} \label{subsect: listeriosi}
The data we have collected and analyzed consist of detailed
information about the persons who have contracted listeriosis in
Lombardy between years 2005 and 2011. This region accounts for 16\%
of the Italian population ($\sim 10\,000\,000$ inhabitants), but for
55\% of the notified listeriosis cases in the entire country. These
cases have been identified through a laboratory-based surveillance
system enhanced in the latest years \cite{mammina}. We have focused
our attention on some variables, such as the date of identification
of the disease and the province of residence of the patient, so that
we are able to analyze the spatiotemporal location of cases.
We notice that the data increase in the latest years;
this fact is due to an improvement in the
transmission of information: since 2008 the process has become more
systematic. Hence we have decided to limit our statistical tests to
the cases individuated from 2008 on.


Another important variable of the data is the molecular type of each
\emph{L.mo\-nocyto\-ge\-nes} isolate, which has been identified through a
laboratory analysis based on MLST (MultiLocus Sequence Typing)
\cite{salcedo}. Thanks to this laboratory work, it has been possible
to concentrate our statistical study on a single sequence type (ST).
In fact possible confirmations of the presence of epidemics would
make sense only if the cases refer to a unique type \cite{sauders}.\\
The statistic tests we have performed consider only  the data
referred to isolates belonging to ST38, which is the most numerous
one. In fact the database contains information about 180 cases, of
which 139 are notified since 2008; since this year there are 36
strains belonging to ST38, whereas the second most numerous is ST1,
with only 18 cases.

\subsection{Identification of epidemics in space and time} \label{subsect: distr
spatiotemp}

We ask ourselves if there is evidence of epidemics in our data.
We test the null hypothesis of absence of epidemics ($H_0$)
against the alternative hypothesis of presence of epidemics ($H_1$).
In particular, if in a certain spatiotemporal region an epidemic occurs, the number of
detected cases increases.\\
Let $N_{i,j}$ be the number of detected cases in region $i$ at time $j$;
we cannot use the statistics $\max N_{i,j}$ because these random variables
are not identically distributed. In fact under the null hypothesis we suppose
that the number of detected cases is distributed as a Poisson variable
with intensity $\lambda\cdot p_{i,j}$
depending on the population of regions $i$ at times $j$,
and hence we test this hypothesis
with the UMP test given in Theorem~\ref{teo: MP_test}.
We use a conservative
estimate of $\lambda$ ($\hat{\lambda}\approx 9.703 \cdot 10^{-7}$).

Each case belonging to ST38 has
been provided with a spatial variable defining the province of
residence of the patient. We point out that the provinces of Sondrio and Mantova have not
communicated any case of listeriosis and so they have been excluded
from the analysis: $R$ is then a set describing Lombardy without the
territories of these two provinces. Besides, we have to specify that province
Monza e Brianza was born during the considered period of time, so we
have decided to attribute label ``MB'' to any patient living in
places belonging to this province in 2011, even if the case of
listeriosis was detected before the birth of the province. The time
interval $T$ has been partitioned through 4 years:
$2008,\ldots,2011$. Table
\ref{tab:spaziotemp} shows the values of $n_{i,j}$ for any $1 \leq i\leq 10, 1 \leq
j\leq 4$.

\begin{table}
\caption{Number of cases $n_{i,j}$ in each province and
year}
{\begin{tabular}{@{}rlcccccccccc}
\hline
& & \multicolumn{10}{c}{\bf{Province}} \\
&  & BG & BS & CO & CR & LC & LO & MB & MI & PV & VA\\
\hline
{\multirow{4}{*}{\begin{sideways}\bf{Year}\end{sideways}}}
& 2008 & 0 & 0 & 0 & 0 & 0 & 0 & 0 & 0 & 0 & 1\\
& 2009 & 2 & 1 & 0 & 1 & 1 & 1 & 1 & 3 & 0 & 0\\
& 2010 & 8 & 1 & 1 & 0 & 0 & 0 & 0 & 4 & 0 & 1\\
& 2011 & 4 & 0 & 0 & 0 & 0 & 0 & 0 & 4 & 1 & 0\\
\hline
\end{tabular}}
\label{tab:spaziotemp}
\end{table}

The values of $p_{i,j}$ are
given by ISTAT \cite{istat}; for each year we have chosen the data
referring to December 31st. As concerns years 2008 and 2009, we have
chosen as population of Monza e Brianza the same population of
January 1st 2010, and this value has been subtracted to the
population of Milano.\\

We define
$$
Y_{i,j}:= \Ff_{{F^0_{i,j}}}(N_{i,j},U_{i,j}) = (1-U_{i,j})F_{F^0_{i,j}}(N_{i,j}^-) + U_{i,j} F_{F^0_{i,j}}(N_{i,j})
$$ for any $1
\leq i\leq 10, 1 \leq j\leq 4$. By Theorem~\ref{teo: MP_test}, we focus on the statistics
\(
M:=
{\max_{i, j}} Y_{i,j}.
\)
To find the $p$-value of our test,
our aim is to calculate an upper and lower bound in terms of the observed
$F^0_{i,j}(N_{i,j})$ for
\[
P({\max_{i, j}} Y_{i,j}>\max_{i,j} \Ff_{{F^0_{i,j}}}(n_{i,j},u_{i,j}) ) = P(M>m ) .
\]
If this probability is lower than the
confidence level of our test, then we can reject the null
hypothesis. \\

By definition of $\Ff_{F}$, and since $N_{i,j}$ is integer-valued ($N_{i,j}^-=N_{i,j}-1$), we trivially have
\[
F_{F^0_{i,j}}(N_{i,j}-1)  \leq Y_{i,j} \leq F_{F^0_{i,j}}(N_{i,j}) , \forall i,j
\]
and hence
\begin{equation}\label{eq:trivial_MAX}
\underline{M}:=\max_{i,j} F_{F^0_{i,j}}(N_{i,j}-1)  \leq M \leq \max_{i,j} F_{F^0_{i,j}}(N_{i,j}) =: \overline{M}.
\end{equation}
If we define
\begin{equation*}
\underline{m}:=\max_{i,j} F_{F^0_{i,j}}(n_{i,j}-1)  , \qquad \overline{m} = \max_{i,j} F_{F^0_{i,j}}(n_{i,j}) .
\end{equation*}
then, by \eqref{eq:trivial_MAX}, under $H_0$ we get
\[
1 - \overline{m}^{i\cdot j} = P(M>\overline{m}) \leq P(M>m) \leq P(M>\underline{m}) = 1 - \underline{m}^{i\cdot j},
\]
i.e., $1 - \overline{m}^{i\cdot j} \leq p \leq 1 - \underline{m}^{i\cdot j}$, where $p$ is the $p$-value of our UMP test.

If $1 - \underline{m}^{i\cdot j}$
is smaller than our significance level, we can reject the null
hypothesis and state that an epidemic occurred ($\Phi=1$ in
Theorem~\ref{teo: MP_test}); if $1 - \overline{m}^{i\cdot j}$
is greater than the
significance level, the null hypothesis cannot be rejected ($\Phi=0$
in Theorem~\ref{teo: MP_test}); if only the first value is smaller
than the significance level, a randomized test has to be carried on
($0<\Phi<1$ in Theorem~\ref{teo: MP_test}).
With our sample we find that
\( 1 - \overline{m}^{i\cdot j} = 0.00006 \) and \( 1 - \underline{m}^{i\cdot j} = 0.00053\).
These values force us to reject the null hypothesis, and hence the
number of cases of listeriosis with isolates belonging to ST38
detected in the province and year corresponding to the maximum is significantly higher
than expected under non-epidemic conditions. Hence we can statistically conclude
that an epidemic has occurred in Bergamo in 2010.\\
We have also continued the analysis by asking ourselves whether in
some other provinces and years we could find some epidemics. To this
aim we have repeated the spatiotemporal test excluding from the
sample the datum that refers to Bergamo cases in 2010. This analysis
has not given any result, because in no case we have obtained
sufficiently small values to reject the null hypothesis. This
conclusion does not mean that we exclude the possibility of
existence of other epidemics, but just that further analyses have to
be carried on.

\end{document}